\newtheorem{theorem}{Theorem}
\newtheorem{corollary}[theorem]{Corollary}
\newtheorem{lemma}[theorem]{Lemma}
\theoremstyle{remark}
\newtheorem{remark}{Remark}
\newcommand{\KL}[2]{D(#1 \| #2)}
\newcommand{\tvdist}[2]{D_{TV}(#1, #2)}
\newcommand{\expec}[2][]{\mathbb{E}_{#1}\left[ #2 \right]}
\DeclareMathOperator*{\essinf}{ess\,inf}
\DeclareMathOperator*{\esssup}{ess\,sup}
\begin{document}
\title{A Note on Reverse Pinsker Inequalities}

\author{Olivier Binette\\Universit\'e du Qu\'ebec \`a Montr\'eal\\{\small Email: \url{olivier.binette@gmail.com}}}
\IEEEpubid{\begin{minipage}{\textwidth}\ \\[12pt] \centering
 10.1109/TIT.2019.2896192 \copyright 2019 IEEE. Personal use is permitted, but republication/redistribution requires IEEE permission.\\
  See \url{http://www.ieee.org/publications\_standards/publications/rights/index.html} for more information.
\end{minipage}} 
\maketitle

\begin{abstract}
A simple method is shown to provide optimal variational bounds on $f$-divergences with possible constraints on relative information extremums. Known results are refined or proved to be optimal as particular cases.
\end{abstract}

\begin{IEEEkeywords}
Kullback-Leibler divergence, reverse Pinsker inequalities, f-divergences, range of values, upper bounds
\end{IEEEkeywords}

\section{Introduction}

This note is concerned with optimal upper bounds on relative entropy and other $f$-divergences in terms of the total variation distance and relative information extremums. When taking relative entropy as the $f$-divergence, such upper variational bounds have been referred to as \textit{reverse Pinsker inequalities} \cite{Sason2016, Bocherer2016}. They are used in the optimal quantization of probability measures \cite{Bocherer2016} and have also appeared in Bayesian nonparametrics for controlling the prior probability of relative entropy neighbourhoods (see e.g. (A.2) in \cite{Binette2018}). 

Our main theorem demonstrates a simple method that yields optimal ``reverse Pinsker inequalities'' for any $f$-divergence. This refines or shows the optimality of previously best known inequalities while avoiding arguments that are tuned to particular cases. 
In particular, Simic \cite{Simic2009b} uses a global upper bound on the Jensen function to bound relative entropy by a function of relative information extremums. Corollary \ref{cor:m_M} below refines their inequality to best possible. More recently, three different bounds on relative entropy involving the total variation distance have been proposed in Theorem 23 of \cite{Sason2016} in Theorem 7 of \cite{Verdu2014} and in Theorem 1 of  \cite{Sason2015}. Our results show that the inequalities of \cite{Sason2016} and \cite{Verdu2014} are in fact optimal in related contexts. Another direct application of the method improves Theorem 34 in \cite{Sason2016}, which is an upper bound on R\'enyi's divergence in terms of the variational distance and relative information maximum, while providing a simpler proof for this type of inequality. Vajda's well-known ``range of values theorem'' (see \cite{Vajda1972, Liese2006, Vajda2009, Kumar2004, Kumar2005}) is also recovered as an application.

The rest of the paper is organized as follows. Section \ref{sec:results} presents the definitions and main results. Examples with particular $f$-divergences are provided in section \ref{sec:applications} and proofs are given in section \ref{sec:proofs}.

\section{Main results}\label{sec:results}

Let $(P, Q)$ be a pair of probability measures defined on a common measurable space. It is assumed throughout that $P \ll Q$. Given a convex function $f: [0, \infty) \rightarrow (-\infty, \infty]$ such that $f(1) = 0$, the $f$-divergence between $P$ and $Q$ is defined as
\begin{equation}\label{eqdef:f_divergence}
    D_f(P\| Q) = \expec[Q]{f\left(\frac{dP}{dQ}\right)}.
\end{equation}
In particular, the relative entropy $\KL{P}{Q}$ and the total variation distance $\tvdist{P}{Q} = \sup_{A}|P(A) - Q(A)|$ correspond to the cases $f(t) =  t\log(t)$ and $f(t) = \tfrac{1}{2}|t-1|$ respectively.

For fixed $\delta\geq 0$, $m \geq 0$ and $M \leq \infty$, we consider the set $\mathcal{A}(\delta, m, M)$ of all probability measure pairs $(P, Q)$ defined on a common measurable space and respecting the conditions : $P \ll Q$,
\begin{equation}\label{eqdef:m_M}
    \essinf \frac{dP}{dQ} = m,\quad \esssup \frac{dP}{dQ} = M
\end{equation}
and 
\begin{equation}\label{eqdef:delta}
    \tvdist{P}{Q} = \delta.
\end{equation}
Here $\essinf$ and $\esssup$ represent the essential infimum and supremum taken with respect to $Q$.

The following theorem provides the best upper bound on the $f$-divergence over the class $\mathcal{A}(\delta, m, M)$ determined by \eqref{eqdef:m_M} and \eqref{eqdef:delta}.

\begin{theorem}\label{thm:m_M_delta}
	If $\delta\geq 0$, $m \geq 0$ and $M < \infty$ are such that $\mathcal{A}(\delta, m, M) \not = \emptyset$, then
	\begin{equation}\label{eq:thm_m_M_delta}
		\sup_{(P, Q) \in \mathcal{A}(\delta, m, M)} D_f(P\| Q) = \delta \left( \frac{f(m)}{1-m} + \frac{f(M)}{M-1} \right).
	\end{equation}
\end{theorem}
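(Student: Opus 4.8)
The plan is to pass to the likelihood ratio $X=\frac{dP}{dQ}$ and reduce the whole statement to a single convexity estimate. First I would record that $\expec[Q]{X}=1$ and that
\[
\delta=\tvdist{P}{Q}=\tfrac12\expec[Q]{|X-1|}=\expec[Q]{(X-1)_+}=\expec[Q]{(1-X)_+},
\]
the last two equalities holding because $\expec[Q]{X-1}=0$ makes the positive and negative parts of $X-1$ equal in $Q$-mean. Hence a pair $(P,Q)$ lies in $\mathcal{A}(\delta,m,M)$ exactly when $X\in[m,M]$ $Q$-a.s.\ with $m$ and $M$ attained as essential extremums, $\expec[Q]{X}=1$, and $\expec[Q]{(X-1)_+}=\delta$. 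I would then dispose of the degenerate case at the outset: if $m=M$ then $\mathcal{A}(\delta,m,M)\neq\emptyset$ forces $m=M=1$, $\delta=0$, $P=Q$, and the supremum is $0$; so I may assume $m<1<M$ (equivalently $\delta>0$), for which the right-hand side of \eqref{eq:thm_m_M_delta} is unambiguous.

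For the upper bound the key point is that a convex $f$ with $f(1)=0$ lies below each of its two chords through $(1,0)$: $f(x)\le\frac{1-x}{1-m}f(m)$ for $x\in[m,1]$ and $f(x)\le\frac{x-1}{M-1}f(M)$ for $x\in[1,M]$. Splicing these gives the single pointwise bound
\[
f(x)\le\frac{f(m)}{1-m}(1-x)_+ +\frac{f(M)}{M-1}(x-1)_+,\qquad x\in[m,M].
\]
Substituting $x=X$ and integrating against $Q$ (the expectations being well defined, since $X$ is bounded and $f$ is bounded below by a supporting line at $1$), the one-sided total-variation identities above collapse the right-hand side to $\delta\big(\frac{f(m)}{1-m}+\frac{f(M)}{M-1}\big)$, valid for every pair in $\mathcal{A}(\delta,m,M)$.

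For tightness I would build an explicit maximizer on three atoms: $Q^\ast$ with masses $a=\frac{\delta}{1-m}$, $b=1-\frac{\delta}{1-m}-\frac{\delta}{M-1}$, $c=\frac{\delta}{M-1}$, and $P^\ast$ with relative density $m$, $1$, $M$ on these atoms respectively. Nonnegativity of $b$ is where the hypothesis $\mathcal{A}(\delta,m,M)\neq\emptyset$ is used: for any admissible $X$, $\delta=\expec[Q]{(1-X)\mathbf{1}_{\{X<1\}}}\le(1-m)Q(X<1)$ and likewise $\delta\le(M-1)Q(X>1)$, whence $\frac{\delta}{1-m}+\frac{\delta}{M-1}\le Q(X<1)+Q(X>1)\le1$. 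A direct check then gives $a+b+c=1$, $\expec[Q^\ast]{X^\ast}=ma+b+Mc=1$, $\tvdist{P^\ast}{Q^\ast}=(M-1)c=\delta$, and (since $\delta>0$, so $a,c>0$) that the essential extremums of $X^\ast$ are $m$ and $M$; thus $(P^\ast,Q^\ast)\in\mathcal{A}(\delta,m,M)$ and $D_f(P^\ast\|Q^\ast)=af(m)+bf(1)+cf(M)=\delta\big(\frac{f(m)}{1-m}+\frac{f(M)}{M-1}\big)$, meeting the bound and establishing \eqref{eq:thm_m_M_delta}.

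I expect the work to be essentially bookkeeping rather than conceptual: the substantive idea is the single spliced-chord inequality together with the identity $\expec[Q]{(1-X)_+}=\expec[Q]{(X-1)_+}=\delta$, and the only points needing care are the degenerate case $m=M$, the possibility $f(m)=+\infty$ when $m=0$ (harmless, since then $a=\delta>0$ and both sides are infinite), and the verification that the middle atom $b$ is nonnegative, which is precisely where nonemptiness of $\mathcal{A}(\delta,m,M)$ enters.
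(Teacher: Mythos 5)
Your proof is correct and takes essentially the same route as the paper's: the upper bound rests on the same chord-through-$(1,0)$ convexity estimate (you integrate a single spliced pointwise inequality $f(x)\le\frac{f(m)}{1-m}(1-x)_+ +\frac{f(M)}{M-1}(x-1)_+$, where the paper conditions on $\{\frac{dP}{dQ}\le 1\}$ and applies the chord lemma to each piece, but the resulting computation is identical), and your three-atom pair $(P^\ast,Q^\ast)$ is exactly the paper's example \eqref{eq:prob_measures_ternary} in a different parametrization. Your inline verification that the middle atom has nonnegative mass is precisely the content of the paper's Corollary \ref{cor:tv_bound}, so nothing essential differs.
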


\begin{remark}
    In the case where $m=1$ or $M=1$, any $(P, Q) \in \mathcal{A}(\delta, m, M)$ must be such that $\delta = D_{TV}(P, Q) = 0$. The right hand side of \eqref{eq:thm_m_M_delta} is then understood as being equal to $0$.
\end{remark}

We can obtain from Theorem 1 tight bounds for more general families of distributions. Consider for instance
\begin{equation}
    \mathcal{B}(m, M) = \bigcup_{\delta \geq 0} \mathcal{A}(\delta, m, M)
\end{equation}
and
\begin{equation}\label{eqdef:B_m_M}
    \mathcal{C}(\delta) = \bigcup_{\substack{m \in [0,1]\\M \in [1,\infty]}} \mathcal{A}(\delta, m, M).
\end{equation}

Using the first family, Corollary \ref{cor:m_M} below provides the range of $D_f$ as a function of relative information bounds. 
\begin{corollary}\label{cor:m_M}
    If $m \geq 0$ and $M < \infty$ are such that $\mathcal{B}(m, M) \not = \emptyset$, then
    \begin{equation}
        \sup_{(P, Q) \in \mathcal{B}(m, M)} D_f(P\| Q) =  \frac{(M-1)f(m) + (1-m)f(M)}{M-m}.
    \end{equation}
\end{corollary}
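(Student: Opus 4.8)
The plan is to reduce the statement to Theorem~\ref{thm:m_M_delta} by optimizing over $\delta$. Since $\mathcal{B}(m,M) = \bigcup_{\delta\geq 0}\mathcal{A}(\delta,m,M)$, Theorem~\ref{thm:m_M_delta} gives
\[
\sup_{(P,Q)\in\mathcal{B}(m,M)} D_f(P\|Q) \;=\; \sup_{\delta\in S}\, \delta\left(\frac{f(m)}{1-m} + \frac{f(M)}{M-1}\right),
\]
where $S = \{\delta\geq 0 : \mathcal{A}(\delta,m,M)\neq\emptyset\}$ and the bracketed coefficient $c := \frac{f(m)}{1-m} + \frac{f(M)}{M-1}$ does not depend on $\delta$. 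It therefore suffices to establish (i) $c\geq 0$, and (ii) $\sup S = \tfrac{(1-m)(M-1)}{M-m}$; multiplying the two then yields the asserted value.

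For (i), note that $P\ll Q$ and $P(\cdot)$ being a probability measure force $\expec[Q]{\tfrac{dP}{dQ}}=1$, so $m\leq 1\leq M$ whenever $\mathcal{B}(m,M)\neq\emptyset$. If $m=1$ or $M=1$ then $\tfrac{dP}{dQ}=1$ $Q$-a.s., hence $P=Q$ and both sides vanish (reading the right-hand side via the stated convention); so assume $m<1<M$. Applying convexity of $f$ to the chord joining $(m,f(m))$ and $(M,f(M))$ at the interior point $1$ gives $\frac{(M-1)f(m)+(1-m)f(M)}{M-m}\geq f(1)=0$, and dividing by $(1-m)(M-1)>0$ yields $c\geq 0$.

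For (ii), write $Z=\tfrac{dP}{dQ}$ and recall $\tvdist{P}{Q}=\expec[Q]{(Z-1)_+}$. Since $z\mapsto(z-1)_+$ is convex and $Z\in[m,M]$ $Q$-a.s.\ with $\expec[Q]{Z}=1$, bounding $(z-1)_+$ above by its chord on $[m,M]$ and taking expectations gives $\tvdist{P}{Q}\leq \frac{(1-m)(M-1)}{M-m}$, so $S\subseteq[0,\delta_{\max}]$ with $\delta_{\max}:=\frac{(1-m)(M-1)}{M-m}$. For the matching lower bound it is enough to exhibit one pair in $\mathcal{A}(\delta_{\max},m,M)$: take $Q$ supported on $\{0,1\}$ with $Q(\{1\})=\frac{1-m}{M-m}$ and $\tfrac{dP}{dQ}$ equal to $M$ at $1$ and $m$ at $0$. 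A direct check shows $P$ is a probability measure, $\essinf Z=m$, $\esssup Z=M$, and $\tvdist{P}{Q}=\delta_{\max}$, so $\delta_{\max}\in S$ and $\sup S=\delta_{\max}$.

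Combining (i), (ii) and $\delta_{\max}\in S$, nonnegativity of $c$ lets us evaluate the supremum over $\delta$ at $\delta_{\max}$, so $\sup_{(P,Q)\in\mathcal{B}(m,M)}D_f(P\|Q)=c\,\delta_{\max}=\frac{(M-1)f(m)+(1-m)f(M)}{M-m}$, as claimed. The only point requiring care is the interplay in the first display: one must know that the supremum over $\delta$ is governed by the single extreme value $\delta_{\max}$ rather than by an interior $\delta$, which is precisely why the sign of $c$ is checked in step (i), and one must correctly identify $\delta_{\max}$ as the maximal total variation distance compatible with the relative information bounds $m,M$.
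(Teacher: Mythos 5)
Your proof is correct and follows essentially the same route as the paper: both reduce to Theorem~\ref{thm:m_M_delta} and identify $\frac{(1-m)(M-1)}{M-m}$ as the maximal admissible total variation distance (your chord bound on $(z-1)_+$ is exactly the paper's Corollary~\ref{cor:tv_bound}). You add two details the paper leaves implicit --- the sign check $c\geq 0$ and an explicit two-point pair attaining $\delta_{\max}$ rather than letting $\delta\rightarrow\delta_{\max}$ --- which make the argument slightly more self-contained but do not change the approach.
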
 

Using the second family \eqref{eqdef:B_m_M}, we re-obtain Theorem 4 of \cite{Sason2016} (see also Lemma 11.1 in \cite{Basu2011}). Taking the union over possible values of $\delta$ also yields Vajda's well-known ``range of values theorem'' (see \cite{Liese2006, Vajda1972, Vajda2009, Kumar2004, Kumar2005}).
\begin{corollary}\label{cor:delta}
    If $ 0 \leq \delta \leq 1$, then
    \begin{equation}\label{eqcor:delta}
        \sup_{(P, Q) \in \mathcal{C}(\delta)} D_f(P \| Q) = \delta\left( f(0) + \lim_{M \rightarrow \infty} \frac{f(M)}{M} \right).
    \end{equation}
\end{corollary}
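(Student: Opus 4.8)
The plan is to prove the two inequalities in \eqref{eqcor:delta} separately. For the ``$\le$'' direction I would not even invoke Theorem~\ref{thm:m_M_delta} but argue directly from a one–variable majorization of $f$. Set $b = \lim_{M\to\infty} f(M)/M$. Since $f$ is convex and $f(1)=0$, the secant slope $M \mapsto f(M)/(M-1)$ is nondecreasing on $(1,\infty)$, so $b$ is well defined in $(-\infty,\infty]$ and equals $\lim_{M\to\infty} f(M)/(M-1)$; in particular $f(t)/(t-1) \le b$ for every $t>1$. Combining this with the convexity inequality $f(t) \le (1-t)f(0) + t f(1) = (1-t)f(0)$, valid for $t\in[0,1]$, one obtains
\[
    f(t) \;\le\; f(0)\,(1-t)_+ \;+\; b\,(t-1)_+ \qquad \text{for all } t\ge 0,
\]
with the obvious reading when $f(0)=\infty$ or $b=\infty$. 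Integrating this against $Q$ and using the identities $\int (1-\tfrac{dP}{dQ})_+\,dQ = \int (\tfrac{dP}{dQ}-1)_+\,dQ = \tvdist{P}{Q}$, every $(P,Q)\in\mathcal{C}(\delta)$ satisfies $D_f(P\|Q) \le \delta\,(f(0)+b)$. This handles all pairs at once, including those with $\esssup \tfrac{dP}{dQ}=\infty$, which Theorem~\ref{thm:m_M_delta} does not cover directly.

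For the ``$\ge$'' direction I would build near–extremal pairs. The cases $\delta=0$ (where $\tvdist{P}{Q}=0$ forces $P=Q$, so $\mathcal{C}(0)=\{(P,P)\}$ and $D_f=0$) and $\delta=1$ (which is vacuous, because $P\ll Q$ forces $\tvdist{P}{Q}<1$) are immediate. For $0<\delta<1$ and each finite $M$ large enough that $q_1 := 1-\delta-\delta/(M-1)\ge 0$, take the three–point pair $(P,Q)$ with $\tfrac{dP}{dQ}$ equal to $0,1,M$ on atoms of $Q$–mass $\delta,\,q_1,\,\delta/(M-1)$, respectively. One checks $P\ll Q$, $\essinf \tfrac{dP}{dQ}=0$, $\esssup \tfrac{dP}{dQ}=M$ and $\tvdist{P}{Q}=\delta$, so $(P,Q)\in\mathcal{A}(\delta,0,M)\subseteq\mathcal{C}(\delta)$, while $D_f(P\|Q)=\delta f(0)+\delta f(M)/(M-1)$. (Alternatively, once this shows $\mathcal{A}(\delta,0,M)\neq\emptyset$, one may instead quote Theorem~\ref{thm:m_M_delta} with $m=0$.) Letting $M\to\infty$ and using $f(M)/(M-1)\to b$ gives $\sup_{(P,Q)\in\mathcal{C}(\delta)}D_f(P\|Q)\ge\delta(f(0)+b)$, matching the upper bound.

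The step I expect to be the crux is the pointwise majorization: one needs the \emph{sharp} constant $b=\lim_{M\to\infty}f(M)/M$ rather than any convenient over-estimate, which comes from monotonicity of secant slopes of the convex $f$, and one must track the extended-real cases $f(0)=\infty$ and $b=\infty$ so that the integration step and the degenerate situations with $m$ or $M$ equal to $1$ (cf.\ the Remark after Theorem~\ref{thm:m_M_delta}) remain valid. The second delicate point is that the answer features $f(0)$ and not $\lim_{t\to0^+}f(t)$, which is exactly why the extremal construction must put positive $Q$–mass on the set $\{\tfrac{dP}{dQ}=0\}$; verifying that $\mathcal{A}(\delta,0,M)$ is nonempty in this stronger sense is what the three–point example is there to do.
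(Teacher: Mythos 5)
Your proof is correct, and the upper-bound half takes a genuinely different route from the paper. The paper keeps the conditional decomposition \eqref{eq:t1p_decomp} from Theorem \ref{thm:m_M_delta} and, to handle pairs with $\esssup \frac{dP}{dQ} = \infty$, truncates at level $K$ and invokes Fatou's lemma, Lemma \ref{lem:convexity} on $[1,K]$, and the monotone convergence theorem before passing to the limit. You instead establish the single pointwise majorization $f(t) \le f(0)(1-t)_+ + b\,(t-1)_+$ with $b = \lim_{M\to\infty} f(M)/M = \sup_{t>1} f(t)/(t-1)$, and integrate it against $Q$ using $\int (1-\tfrac{dP}{dQ})_+\,dQ = \int (\tfrac{dP}{dQ}-1)_+\,dQ = \tvdist{P}{Q}$. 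This is more elementary (no truncation or limit interchange), treats finite and infinite $\esssup$ uniformly, and makes transparent why the sharp constant is the asymptotic slope $b$; the only things to check are the extended-real conventions when $f(0)=\infty$ or $b=\infty$, which you do, noting correctly that neither quantity can be $-\infty$. What the paper's route buys in exchange is reuse: it recycles the machinery of Theorem \ref{thm:m_M_delta} verbatim and reads the upper bound off as the $m\to 0$, $M\to\infty$ monotone limit of \eqref{eq:thm_m_M_delta}. Your lower-bound construction is in substance identical to the paper's: your three-point pair with $\frac{dP}{dQ}\in\{0,1,M\}$ and $Q$-masses $\delta$, $1-\delta-\delta/(M-1)$, $\delta/(M-1)$ is exactly the extremal pair \eqref{eq:prob_measures_ternary} specialized to $m=0$, and both arguments then let $M\to\infty$. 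Your explicit handling of $\delta=0$ and the vacuity of $\delta=1$ matches (indeed slightly sharpens) the paper's treatment.
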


\begin{remark}
    Theorem \ref{thm:m_M_delta} generalizes Theorem 23 in \cite{Sason2016} with $f(t) = t\log(t)$ for the relative entropy: the upper bounds obtained are the same in this case. The proofs also share similarities. A decomposition equivalent to \eqref{eq:t1p_decomp} is used in \cite{Sason2016} and their proof is concluded by using the monotonicity of the function $t \mapsto t\log(t)/(1-t)$, continuously extended at $0$ and $1$.
\end{remark}

\section{Examples}\label{sec:applications}

This section lists applications to particular $f$-divergences and follows the standard definitions of \cite{Sason2016}. The bounds obtained are compared to similar inequalities recently shown in the literature.

\subsection{Relative entropy (Kullback-Leibler divergence)} The relative entropy corresponds to $f(t) = t\log(t)$ in \eqref{eqdef:f_divergence} and is denoted $\KL{P}{Q}$. The results are more neatly stated in this case as functions of $a = \essinf \frac{dQ}{dP} = M^{-1}$ and $b = \esssup \frac{dQ}{dP} = m^{-1}$, assuming both quantities are well defined. Theorem \ref{thm:m_M_delta} then shows
\begin{equation}\label{eq:KL_bound} 
    \sup_{(P, Q) \in \mathcal{A}(\delta, m, M)}\KL{P}{Q} 
    = \delta \left(\frac{\log(a)}{a-1} + \frac{\log(b)}{1-b}\right).
\end{equation}
In particular, the resulting upper bound on $\KL{P}{Q}$ is Theorem 23 of \cite{Sason2016}. Letting $b \rightarrow \infty$ gives the related Theorem 7 of \cite{Verdu2014} and the inequality presented therein is consequently optimal over $\bigcup_{0 \leq m \leq 1}\mathcal{A}(\delta, m, M)$.

Also, Corollary \ref{cor:m_M} yields
\begin{equation*}
    \sup_{(P, Q) \in \mathcal{B}(m, M)}\KL{P}{Q} 
    = \frac{(a-1)\log(b)+(1-b)\log(a)}{b-a}.
\end{equation*}
For comparison, Theorem I of \cite{Simic2009b} (which also appears as Theorem I in \cite{Simic2011} and is related to results in \cite{Simic2009, Simic2009a}) provides the weaker upper bound
\begin{equation*}
    \frac{a\log(b) -b\log(a)}{b-a}
            + \log\left( \frac{b-a}{\log(b)-\log(a)} \right) - 1
\end{equation*}
on $\KL{P}{Q}$ over $(P, Q) \in \mathcal{B}(m, M)$ as an application of their ``best possible global bound'' for the Jensen functional.

\subsection{Hellinger divergence of order $\alpha$} Let $\alpha \in (0,1) \cup (1, \infty)$ and $f(t) = (t^{\alpha}-1)/(\alpha-1)$. The corresponding divergence is denoted $\mathcal{H}_{\alpha} 
(P \| Q)$. Theorem 1 shows in this case
\begin{equation*}
    \sup_{(P, Q) \in \mathcal{A}(\delta, m, M)} \mathcal{H}_\alpha (P \| Q) = \frac{\delta}{1-\alpha}\left( \frac{1-m^\alpha}{1-m} - \frac{M^\alpha - 1}{M-1} \right).
\end{equation*}
When $\alpha = 2$, $\mathcal{H}_{\alpha} = D_{\chi^2}$ is the $\chi^2$ divergence and the above can be rewritten as
\begin{equation*}
    \sup_{(P, Q) \in \mathcal{A}(\delta, m, M)} D_{\chi^2} (P \| Q) = \delta(M-m).
\end{equation*}
For comparison, Example 6 of Theorem 5 in \cite{Sason2016} is the weaker inequality
\begin{equation*}
    D_{\chi^2}(P \| Q) \leq 2 \delta \max\{M-1, 1-m\}.
\end{equation*}

\subsection{R\'enyi's divergence}
Also related is R\'enyi's $\alpha$-divergence, defined as 
\begin{equation*}
    D_\alpha(P \| Q) = \frac{1}{\alpha-1}\log(1 + (\alpha -1)\mathcal{H}_\alpha(P \| Q))
\end{equation*}
and which is a monotonous transform of $\mathcal{H}_\alpha$. Correspondingly we obtain
\begin{equation*}
    D_\alpha(P \| Q) \leq \frac{1}{\alpha - 1} \log\left(1 + \delta\left( \frac{M^\alpha - 1}{M-1} - \frac{1-m^\alpha}{1-m} \right) \right).
\end{equation*}
Taking $m=0$ recovers Theorem 34 of \cite{Sason2016}. Their inequality, which is also appears in Theorem 3 of \cite{Sason2015c} for $\alpha > 2$, is improved when $m > 0$.

\section{Proofs}\label{sec:proofs}

The starting point of our analysis is the following simple known application of convexity.

\begin{lemma}\label{lem:convexity}
    Let $\kappa$ be a random variable with values in a bounded interval $I = [a, b]$, let $\varphi : I \rightarrow (-\infty, \infty]$ be a convex function and let $\bar \alpha = (b - \expec{\kappa})/(b-a)$. Then
    \begin{equation}
        \expec{\varphi(\kappa)} \leq \bar \alpha \varphi(a) + (1-\bar \alpha) \varphi(b).
    \end{equation}
\end{lemma}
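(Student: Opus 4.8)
The plan is to reduce the statement to the elementary fact that a convex function lies below the chord joining the endpoints of its domain, and then to invoke linearity of expectation. Assume $a < b$ (otherwise $\kappa = a$ almost surely and there is nothing to prove). For each $x \in [a,b]$ write $x = \lambda(x)\,a + (1-\lambda(x))\,b$ with $\lambda(x) = (b-x)/(b-a) \in [0,1]$, which is possible precisely because $x$ lies in $[a,b]$, and note that $\lambda$ is an affine function of $x$. Convexity of $\varphi$ then yields the pointwise bound
\[
    \varphi(x) \le \lambda(x)\,\varphi(a) + (1-\lambda(x))\,\varphi(b), \qquad x \in [a,b].
\]

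Next I would substitute $x = \kappa$ and integrate. When $\varphi(a)$ and $\varphi(b)$ are finite, the right-hand side evaluated at $\kappa$ is an affine function of $\kappa$, so its positive part is bounded and $\expec{\varphi(\kappa)}$ is well defined in $[-\infty,\infty)$; monotonicity and linearity of the integral give $\expec{\varphi(\kappa)} \le \expec{\lambda(\kappa)}\,\varphi(a) + \expec{1-\lambda(\kappa)}\,\varphi(b)$. Because $\lambda$ is affine, $\expec{\lambda(\kappa)} = \lambda(\expec{\kappa}) = (b-\expec{\kappa})/(b-a) = \bar\alpha$ and $\expec{1-\lambda(\kappa)} = 1-\bar\alpha$, which is exactly the asserted inequality.

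The only points needing care are the degenerate cases. If $\varphi(a) = \infty$ or $\varphi(b) = \infty$, the claimed bound is either trivially $\infty$, when the corresponding weight $\bar\alpha$ or $1-\bar\alpha$ is strictly positive, or else $\bar\alpha \in \{0,1\}$; and $\bar\alpha = 1$ forces $\expec{\kappa} = a$ while $\bar\alpha = 0$ forces $\expec{\kappa} = b$, so in either case $\kappa$ equals the relevant endpoint almost surely since $a \le \kappa \le b$, and the two sides coincide. I do not anticipate any genuine obstacle here: the substantive content is entirely the chord inequality together with the affinity of $\lambda$, and the remaining work is only the bookkeeping just described.
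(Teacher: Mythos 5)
Your argument is correct and is essentially the paper's own proof: the paper introduces the random variable $\alpha$ with $\kappa = \alpha a + (1-\alpha)b$, which is exactly your $\lambda(\kappa)$, and then applies convexity pointwise and takes expectations. Your extra care with the degenerate cases ($a=b$ and $\varphi$ taking the value $+\infty$) is sound but not a different method.
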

\begin{proof}
    Let $\alpha$ be the non-negative random variable such that $\kappa = \alpha a + (1-\alpha) b$. Then $\mathbb{E}[\alpha] = \bar \alpha$ and by convexity of $\varphi$ we find
    \begin{align*}
        \expec{\varphi(\kappa)} 
        &\leq \expec{\alpha \varphi(a) + (1-\alpha) \varphi(b)}\\
        &= \bar \alpha \varphi(a) + (1-\bar \alpha) \varphi(b).
    \end{align*}
\end{proof}
As a particular case, we obtain a bound on the total variation distance that is of use in the proof of Theorem \ref{thm:m_M_delta}.

\begin{corollary}\label{cor:tv_bound}
    If $m \geq 0$, $M < \infty$ and $(P,Q) \in \mathcal{B}(m, M)$, then
    \begin{equation}
        \tvdist{P}{Q} \leq \frac{(M-1)(1-m)}{M-m}.
    \end{equation}
\end{corollary}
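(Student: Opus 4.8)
The plan is to specialize Lemma~\ref{lem:convexity} to the random variable $\kappa = dP/dQ$ under $Q$. Since $(P,Q)\in\mathcal{B}(m,M)$, this $\kappa$ takes values in the bounded interval $[m,M]$ and satisfies $\expec[Q]{\kappa}=1$; in particular $m\le 1\le M$, so the interval contains $1$. (If $m=M$ then necessarily $m=M=1$, hence $\kappa=1$ a.s.\ and $\tvdist{P}{Q}=0$, while the right-hand side is the indeterminate $0/0$, read as $0$ by the convention of the earlier remark; so assume $m<M$ below.)

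As noted in Section~\ref{sec:results}, the total variation distance is the $f$-divergence associated with $f(t)=\tfrac12|t-1|$, that is, $\tvdist{P}{Q}=\expec[Q]{\varphi(\kappa)}$ with $\varphi(t)=\tfrac12|t-1|$ convex on $[m,M]$. Applying Lemma~\ref{lem:convexity} with $a=m$, $b=M$ gives $\bar\alpha=(M-\expec[Q]{\kappa})/(M-m)=(M-1)/(M-m)$, hence $1-\bar\alpha=(1-m)/(M-m)$ and
\[
\tvdist{P}{Q}\ \le\ \frac{M-1}{M-m}\,\varphi(m)+\frac{1-m}{M-m}\,\varphi(M).
\]
Because $m\le 1\le M$ one has $\varphi(m)=\tfrac12(1-m)$ and $\varphi(M)=\tfrac12(M-1)$; substituting these values and simplifying collapses the right-hand side to $\frac{(M-1)(1-m)}{M-m}$, which is the claim.

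I do not anticipate a genuine obstacle: the entire content is Lemma~\ref{lem:convexity} applied to $\varphi(t)=\tfrac12|t-1|$, and the only points needing a word of care are the degenerate case $m=M=1$ and the fact that $\expec[Q]{\kappa}=1$ forces $1\in[m,M]$, which fixes the signs used to evaluate $\varphi(m)$ and $\varphi(M)$. One could equally work with $\varphi(t)=(t-1)^+$, for which $\varphi(m)=0$ and $\varphi(M)=M-1$, obtaining the same bound through a marginally shorter computation.
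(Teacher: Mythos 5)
Your proof is correct and is essentially identical to the paper's: both apply Lemma~\ref{lem:convexity} to $\kappa = dP/dQ$ with $a=m$, $b=M$ and $\varphi$ proportional to $|t-1|$, using $\expec[Q]{\kappa}=1$ to get $\bar\alpha=(M-1)/(M-m)$ and then evaluating $|m-1|$ and $|M-1|$ with the signs forced by $m\le 1\le M$. Your extra remarks on the degenerate case $m=M=1$ and the alternative choice $\varphi(t)=(t-1)^+$ are fine but not needed.
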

\begin{proof}
    Lemma \ref{lem:convexity}, applied with $\kappa = \frac{dP}{dQ}$, $\varphi(x) = |x-1|$, $a = m$ and $b = M$, shows that
    \begin{align*}
        \frac{1}{2}\expec[Q]{\left|\frac{dP}{dQ} - 1\right|} 
        &\leq \frac{1}{2}\left\{ \frac{M-1}{M-m}|m-1| + \frac{1-m}{M-m}|M-1| \right\}\\
        &= \frac{(M-1)(1-m)}{M-m}.
    \end{align*}
\end{proof}

We now proceed with the proof of Theorem \ref{thm:m_M_delta}.

\begin{proof}[Proof of Theorem \ref{thm:m_M_delta}]
    Let $(P, Q) \in \mathcal{A}(\delta, m, M)$. If $A = \left\{x \mid \frac{dP}{dQ}(x) \leq 1 \right \}$, then $\delta = Q(A) - P(A)$ and we may write
    \begin{multline}\label{eq:t1p_decomp}
        D_f(P \| Q) = Q(A) \expec[Q]{f\left( \frac{dP}{dQ} \right)\middle | A} \\+ Q(A^c) \expec[Q]{f\left( \frac{dP}{dQ} \right)\middle | A^c}.
    \end{multline}
    To bound the first term on the right-hand side of \eqref{eq:t1p_decomp}, note that $\expec[Q]{\frac{dP}{dQ}\middle | A} = \frac{P(A)}{Q(A)}$ and that $x \in A$ implies $m \leq \frac{dP}{dQ}(x) \leq 1$. An application of Lemma \ref{lem:convexity}, using the fact that $f(1)=0$, therefore yields
    \begin{align}
        \expec[Q]{f\left( \frac{dP}{dQ} \right)\middle | A}
        &\leq \frac{1- \frac{P(A)}{Q(A)}}{1-m} f(m)\nonumber \\
        &= \frac{\delta f(m)}{Q(A)(1-m)}. \label{eq:t1p_first_bound}
    \end{align}
    The second term is similarly bounded as to obtain
    \begin{align}\label{eq:t1p_second_bound}
         \expec[Q]{f\left( \frac{dP}{dQ} \right)\middle | A^c}
         & \leq \left(1-\frac{M-\frac{P(A^c)}{Q(A^c)}}{M-1}\right) f(M)\nonumber \\
         &= \frac{\delta f(M)}{Q(A^c)(M-1)}.
    \end{align}
    Together with \eqref{eq:t1p_decomp}, the inequalities \eqref{eq:t1p_first_bound} and \eqref{eq:t1p_second_bound} show that
    \begin{equation}\label{eq:t1p_upper_bound}
        D_f(P \| Q) \leq \delta \left( \frac{f(m)}{1-m} + \frac{f(M)}{M-1} \right)
    \end{equation}
    whenever $(P, Q) \in \mathcal{A}(\delta, m, M)$.
    
    We now show that the supremum of \eqref{eq:thm_m_M_delta} equals this bound.
    If $\delta = 0$, then the upper bound given by \eqref{eq:thm_m_M_delta} is zero and the supremum trivially attains this bound. If $\delta = 1$, then $\mathcal{A}(\delta, m, M) = \emptyset$ and the statement of Theorem \ref{thm:m_M_delta} is trivially satisfied.
    We can therefore assume $0 < \delta < 1$. Let $q = \frac{M-1}{M-m}$, $p = m q$, $t = \delta (M-m)[(M-1)(1-m)]^{-1}$ and consider the pair of discrete measures
    \begin{equation}\label{eq:prob_measures_ternary}
        \left\{\begin{array}{l}
         P = (tp, t(1-p), 1-t), \\
         Q = (tq, t(1-q), 1-t). 
        \end{array}\right.
    \end{equation}
    Corollary \ref{cor:tv_bound} ensures $0 < t < 1$ and thus $P$ and $Q$ are probability measures. It is also straightforward to verify that $(P, Q) \in \mathcal{A}(\delta, m, M)$ with $t(q-p) = \delta$, $p/q = m$ and $(1-p)/(1-q) = M$. Some algebraic manipulations then show
    \begin{align*}
        D_f(P\| Q) &= tqf\left(\frac{p}{q}\right) + t(1-q)f\left(\frac{1-p}{1-q}\right)\\
        & = \delta \left( \frac{f(m)}{1-m} + \frac{f(M)}{M-1} \right).
    \end{align*}
\end{proof}

\begin{proof}[Proof of Corollary \ref{cor:m_M}]
    Combining Corollary \ref{cor:tv_bound} with equation \eqref{eq:thm_m_M_delta} of Theorem \ref{thm:m_M_delta} yields the upper bound
    $$
        D_f(P\|Q) \leq \frac{(M-1)f(m)+(1-m)f(M)}{M-m}
    $$
    for every $(P, Q) \in \mathcal{B}(m, M)$. 
    To see that the supremum over $\mathcal{B}(m, M)$ equals this bound, it suffices to let $\delta \rightarrow (M-1)(1-m)/(M-m)$ in \eqref{eq:thm_m_M_delta}.
\end{proof}

\begin{proof}[Proof of Corollary \ref{cor:delta}]
    Some care has to be taken when considering the elements of $\mathcal{A}(\delta, 0, \infty)$. To see that the right-hand side of \eqref{eqcor:delta} also upper bounds the elements of this set, we again use the decomposition \eqref{eq:t1p_decomp}. The first term is treated as in \eqref{eq:t1p_first_bound}. For the second term, let $\frac{dP}{dQ} \wedge K = \min\{\frac{dP}{dQ}, K\}$. By Fatou's lemma and Lemma \ref{lem:convexity}, using that $f(1)=0$,
    \begin{align*}
        \expec[Q]{f\left( \frac{dP}{dQ} \right)\middle | A^c}
        &\leq \liminf_{K \rightarrow\infty} \expec[Q]{f\left( \frac{dP}{dQ}\wedge K \right)\middle | A^c}\\
        &\leq  \liminf_{K \rightarrow\infty} \frac{ \expec[Q]{\frac{dP}{dQ}\wedge K\middle | A^c} - 1}{K-1} f(K).
    \end{align*}
    By the monotone convergence theorem, 
    $$\lim_{K \rightarrow\infty}\expec[Q]{\frac{dP}{dQ}\wedge K\middle | A^c} =  \frac{P(A^c)}{Q(A^c)}
    $$ 
    and hence
    \begin{equation*}
        \expec[Q]{f\left( \frac{dP}{dQ} \right)\middle | A^c} \leq \frac{\delta}{Q(A^c)} \lim_{M \rightarrow \infty}\frac{f(M)}{M-1}.
    \end{equation*}
    We note that $\lim_{M\rightarrow \infty} \frac{f(M)}{M-1}$ exists by convexity of $f$ and can be infinite.
    The required upper bound on $D_f(P \| Q)$ is then obtained as in the proof of Theorem \ref{thm:m_M_delta}.
    
    To see that the supremum equals this bound, it suffices to let $M \rightarrow \infty$ in Theorem \ref{thm:m_M_delta}.
\end{proof}

\section*{Acknowledgment}
The author would like to thank Alexis Langlois-R\'emillard and Jean-Fran\c{c}ois Coeurjolly for helpful comments and suggestions. 

\nocite{Csiszar1967, Vajda2009, Taneja2004, Ali1966, Sason2015}

\bibliographystyle{IEEEtran}
\bibliography{reverse_pinsker_inequalities}

\begin{thebibliography}{10}
\providecommand{\url}[1]{#1}
\csname url@samestyle\endcsname
\providecommand{\newblock}{\relax}
\providecommand{\bibinfo}[2]{#2}
\providecommand{\BIBentrySTDinterwordspacing}{\spaceskip=0pt\relax}
\providecommand{\BIBentryALTinterwordstretchfactor}{4}
\providecommand{\BIBentryALTinterwordspacing}{\spaceskip=\fontdimen2\font plus
\BIBentryALTinterwordstretchfactor\fontdimen3\font minus
  \fontdimen4\font\relax}
\providecommand{\BIBforeignlanguage}[2]{{%
\expandafter\ifx\csname l@#1\endcsname\relax
\typeout{** WARNING: IEEEtran.bst: No hyphenation pattern has been}%
\typeout{** loaded for the language `#1'. Using the pattern for}%
\typeout{** the default language instead.}%
\else
\language=\csname l@#1\endcsname
\fi
#2}}
\providecommand{\BIBdecl}{\relax}
\BIBdecl

\bibitem{Sason2016}
I.~Sason and S.~Verd\'u, ``$f$ -divergence inequalities,'' \emph{IEEE
  Transactions on Information Theory}, vol.~62, no.~11, pp. 5973--6006, 2016.

\bibitem{Bocherer2016}
G.~Böcherer and B.~C. Geiger, ``Optimal quantization for distribution
  synthesis,'' \emph{IEEE Transactions on Information Theory}, vol.~62, no.~11,
  pp. 6162--6172, 2016.

\bibitem{Binette2018}
O.~{Binette} and S.~{Guillotte}, ``{Bayesian Nonparametrics for Directional
  Statistics},'' \emph{arXiv e-prints}, 2018, arXiv:1807.00305v2.

\bibitem{Simic2009b}
S.~Simic, ``Best possible global bounds for jensen’s inequality,''
  \emph{Applied Mathematics and Computation}, vol. 215, no.~6, pp. 2224 --
  2228, 2009.

\bibitem{Verdu2014}
S.~Verd\'u, ``Total variation distance and the distribution of relative
  information,'' in \emph{2014 Information Theory and Applications Workshop
  (ITA)}, 2014, pp. 1--3.

\bibitem{Sason2015}
I.~{Sason}, ``{On Reverse Pinsker Inequalities},'' \emph{ArXiv e-prints}, 2015,
  arXiv:1503.07118.

\bibitem{Vajda1972}
I.~Vajda, ``On the f-divergence and singularity of probability measures,''
  \emph{Periodica Mathematica Hungarica}, vol.~2, no.~1, pp. 223--234, 1972.

\bibitem{Liese2006}
F.~Liese and I.~Vajda, ``On divergences and informations in statistics and
  information theory,'' \emph{IEEE Transactions on Information Theory},
  vol.~52, no.~10, pp. 4394--4412, 2006.

\bibitem{Vajda2009}
I.~Vajda, ``On metric divergences of probability measures,''
  \emph{Kybernetika}, vol.~45, no.~6, pp. 885--900, 2009.

\bibitem{Kumar2004}
P.~Kumar and L.~Hunter, ``On an information divergence measure and information
  inequalities,'' \emph{Carpathian Journal of Mathematics}, vol.~20, no.~1, pp.
  51--66, 2004.

\bibitem{Kumar2005}
P.~Kumar and S.~Chhina, ``A symmetric information divergence measure of the
  csisz\'ar's f-divergence class and its bounds,'' \emph{Computers \&
  Mathematics with Applications}, vol.~49, no.~4, pp. 575 -- 588, 2005.

\bibitem{Basu2011}
A.~Basu, H.~Shioya, and C.~Park, \emph{Statistical inference: the minimum
  distance approach}.\hskip 1em plus 0.5em minus 0.4em\relax CRC Press, 2011.

\bibitem{Simic2011}
S.~Simic, ``Sharp global bounds for jensen's inequality,'' \emph{Rocky Mountain
  J. Math.}, vol.~41, no.~6, pp. 2021--2031, 2011.

\bibitem{Simic2009}
------, ``On certain new inequalities in information theory,'' \emph{Acta
  Mathematica Hungarica}, vol. 124, no.~4, pp. 353--361, 2009.

\bibitem{Simic2009a}
------, ``Jensen’s inequality and new entropy bounds,'' \emph{Applied
  Mathematics Letters}, vol.~22, no.~8, pp. 1262 -- 1265, 2009.

\bibitem{Sason2015c}
I.~Sason and S.~Verd\'u, ``Upper bounds on the relative entropy and r\'enyi
  divergence as a function of total variation distance for finite alphabets,''
  in \emph{2015 IEEE Information Theory Workshop - Fall (ITW)}, 2015, pp.
  214--218.

\bibitem{Csiszar1967}
I.~Csisz\'ar, ``On topological properties of f-divergences,'' \emph{Studia
  Scientarum Mathematicarum Hungarica}, pp. 329--339, 1967.

\bibitem{Taneja2004}
I.~J. Taneja and P.~Kumar, ``Relative information of type s, csisz\'ar's
  f-divergence, and information inequalities,'' \emph{Information Sciences},
  vol. 166, no.~1, pp. 105 -- 125, 2004.

\bibitem{Ali1966}
S.~M. Ali and S.~D. Silvey, ``A general class of coefficients of divergence of
  one distribution from another,'' \emph{Journal of the Royal Statistical
  Society. Series B (Methodological)}, vol.~28, pp. 131--142, 1966.

\end{thebibliography}


\end{document}